\newtheorem{myDef}{Definition}
\newtheorem{proposition}{Proposition}
\newtheorem{thm}{Theorem}
\newtheorem{example}{Example}
\begin{document}

\title{Maximal Information Propagation with Budgets}

\author{Haomin Shi \and Yao Zhang \and Zilin Si \and Letong Wang \and Dengji Zhao\institute{ShanghaiTech University, China, email: \{shihm, zhangyao1, sizl, wanglt, zhaodj\}@shanghaitech.edu.cn} }

\maketitle
\bibliographystyle{ecai}

\begin{abstract}
In this paper, we present an information propagation game on a network where the information is originated from a sponsor who is willing to pay a fixed total budget to the players who propagate the information. Our solution can be applied to real world situations such as advertising via social networks with limited budgets.
The goal is to design a mechanism to distribute the budget such that all players in the social network are incentivized to propagate information to all their neighbours.
We propose a family of mechanisms to achieve the goal, where propagating information to all neighbours is a dominant strategy for all players. Furthermore, we also consider the cases where the budget has to be completely shared.
\end{abstract}


\section{Introduction}


In recent years, with the help of the internet, people start taking advantage of social media to propagate information to more people through their friends~\cite{doi:10.1080/15252019.2011.10722189,arney2013networks}. For instance, if an online retailer wants to advertise her products but cannot afford a high advertising fee on ad platforms, one alternative way is to incentivize her customers to propagate the information via their friends. As another example, if a survey team needs many people to fill their questionnaires, they can incentivize people to spread the survey further. The drive of all the applications is propagating information via social networks~\cite{miller2010social}. The challenge is how to incentivize people to do so.

There exists a rich literature that is based on the social network to study classical marketing or mechanism design problems. For example, Brill et al.~\cite{brill2016false} proposed a false-name-proof information collection algorithm for recommendation decisions in social networks, in which they introduced the concept of weight on each node in the social network. Shen et al.~\cite{shen2019multi} proposed a multi-winner contests (MWC) mechanism for strategic contest information diffusion on social networks, which satisfied several desirable properties including false-name-proof, individual rationality, budget constraint, monotonicity, and subgraph constraint. For marketing, Emek et al.~\cite{emek2011mechanisms} presented a theoretical framework to design reward mechanisms for multi-level marketing within social networks and established a full characterization of the geometric mechanism family. Later, Drucker and Fleischer~\cite{drucker2012simpler} proposed a family of mechanisms to prevent Sybil attacks by profiting maximizers while preserving its original properties in multi-level marketing. Moreover, Li et al.~\cite{li2017mechanism} proposed an auction mechanism in a single-item auction setting to incentivize buyers to report their valuations to the seller truthfully as well as propagate the auction information to all their neighbors on a social network and Zhao et al.~\cite{zhao2018selling} further generalized it to the setting of multiple items. For cooperative games, Douceur and Moscibroda~\cite{douceur2007lottery} proposed a lottery tree that motivates people to join or contribute to a network system without participation benefit, but proportionally rewards each participant on its contribution and their leading-in participants' contributions. Furthermore, Abbassi and Misra~\cite{abbassi2011multi} proposed a multi-level revenue sharing for referral-based viral marketing over online social networks to achieve compatibility, individual rationality and potential reach.

There also exist some works on finding and propagating influential nodes in a social network in order to trigger a cascade of further participation. Wang et al.~\cite{7018025} proposed a price-performance-ratio inspired heuristic scheme to economically select seeds within a given budget to maximize the diffusion process. David et al.~\cite{kempe2003maximizing} provided provable approximation guarantees for efficient algorithms of selecting the most influential nodes. Galeotti and Goyal~\cite{galeotti2009influencing} investigated how firms can harness the power of social networks to promote their sales and profits.

There are also some works on incentive referring and participation. Naroditskiy et al.~\cite{naroditskiy2014referral} conducted a field experiment that compared several mechanisms for incentivizing propagation via social media with the support of a charitable cause. Gao et al.~\cite{gao2015survey} surveyed the literature of theoretical frameworks and experimental studies of the incentive strategies used in participatory sensing.

All the works above share the same motivation of propagating information via social networks. In this paper, we focus on how to incentivize people to propagate some information from a sponsor, which is a high-level abstraction of the above settings. In our setting, a sponsor provides a fixed budget to reward the people who propagate information for her. The goal is to incentivize participants to propagate information to all their social neighbours. Note that this is not achievable if we simply use a fixed reward for each invitation because the sponsor will certainly spend more than the budget in this case.

Under our circumstance, the key point is that the budget distributed to all agents is bounded at the beginning. Hence, inviting more people means there are more people to share the limited reward. The well-known winning solution from the 2009 DARPA red balloon challenge is a successful attempt~\cite{pickard2011time}, but it only works on a tree model when there is an actual task to do while in our setting, it is just information propagation. It also cannot fully use the budget in our setting. To combat these problems, we propose a novel reward sharing scheme that incentives people to propagate the information to all their neighbours while the reward shared among them can completely uses the budget.

Specifically, our contributions advance the state of the art as follows:
\begin{itemize}
    \item We define an information propagation game on a graph/network and define the concept of \textit{propagation incentive compatible} to guarantee the information will be fully propagated in the graph.
    \item We propose a novel reward distribution scheme that satisfies the new concept 
    to achieve maximal information propagation. Moreover, it spends all the budget, i.e., it is budget balanced.
    \item Some practical issues have also been discussed to show the applicability of our proposed mechanisms.
\end{itemize}

The rest of the paper is presented as follows. Section 2 introduces our model and the preliminaries. We first propose a simple mechanism in Section 3 to better understand the challenge. Then, in Section 4 we propose our novel budget distribution scheme that guarantees maximal information propagation. Finally, we discuss some practical issues in Section 5, and conclude in Section 6.





\section{The Model}
We focus on an information propagation game where a sponsor aims to propagate a piece of information to as many agents as possible via agents' social connections. In the real-world applications, the connections between agents can represent friendship or neighborhood, and the information could be an advertisement or any other information. We investigate a reward mechanism for the sponsor under a fixed budget to incentivize agents to propagate the information to all their neighbours.

Formally, let $N = \{S,1,2,\dots,n\}$ be the set of all agents in the network and a special agent $S \in N$, which is the sponsor, intends to propagate a piece of information to the others. We model the network as an unweighted directed acyclic graph (DAG) $G=(N, E)$ with source $S$. Each edge $(i,j)\in E$ indicates that agent $i$ can propagate the information to agent $j$. 
All the rewards given by the sponsor to the others come from a fixed budget $\mathcal{B}>0$. Let $\mathcal{G}$ be the space of all networks satisfying our setting. We define the reward mechanism as follows.

\begin{myDef}
A \textbf{reward mechanism} is defined by $M$: $\mathcal{G} \times \mathbb{R}_+ \rightarrow \mathbb{R}^{|N|}$. Given a graph $G = (N,E) \in \mathcal{G}$ and a budget $\mathcal{B}\in \mathbb{R}_+$, the output of the mechanism is $M(G,\mathcal{B}) = \mathbf{r} = (r_S, r_1, r_2, \dots, r_n)$, where $r_i$ is the reward assigned to agent $i$ for $i\in \{1,2,\dots,n\}$ and $r_S$ is the remaining budget that has not been distributed.
\end{myDef}

One minimum requirement of the reward mechanism is that the sum of reward $r_{i}$ equals the budget $\mathcal{B}$.

\begin{myDef}
A mechanism $M$ is \textbf{feasible} if for all $G= (N,E)\in \mathcal{G}$ and all $\mathcal{B}\in \mathbb{R}_+$, the output $M(G,\mathcal{B}) = \mathbf{r} = (r_S, r_1, r_2, \dots, r_n)$ satisfies $\sum_{i\in N} r_i = \mathcal{B}$.
\end{myDef}

From the sponsor's interests, the rewards assigned by the mechanism should be bounded by the budget. Otherwise, the sponsor will have a deficit which is not allowed in our model. This is called weakly budget balanced.

\begin{myDef}
\label{def:bb}
A feasible mechanism $M$ is \textbf{weakly budget balanced} (WBB) if for all $G= (N,E)\in \mathcal{G}$ and all $\mathcal{B}\in \mathbb{R}_+$, the output $M(G,\mathcal{B}) = \mathbf{r}$ satisfies $r_S \geq 0$. More strictly, a mechanism is \textbf{budget balanced} (BB) if $r_S=0$.
\end{myDef}

Moreover, to make the reward mechanism applicable, agents should not suffer a loss from participating. We call this individual rationality.

\begin{myDef}
A feasible mechanism $M$ is \textbf{individually rational} (IR) if and only if for all $G= (N,E)\in \mathcal{G}$ and all $\mathcal{B}\in \mathbb{R}_+$, the output $M(G,B) = \mathbf{r} = (r_S, r_1, r_2, \dots, r_n)$ satisfies $\forall i\in N$, $r_i \geq 0$.
\end{myDef}

The last and the most important property we investigate in this paper is incentivizing information propagation in the network. This is defined as propagation incentive compatibility.

\begin{myDef}
A feasible mechanism $M$ is \textbf{propagation incentive compatible} (PIC) if for all $G = (N,E)\in \mathcal{G}$, all $\mathcal{B}\in \mathbb{R}_+$, all $i\in N\setminus \{S\}$ and all edge set $e \subseteq \{(x,y)| (x,y) \in E, x= i\}$, we have
\[ M(G, \mathcal{B})_i \geq M(G'_e, \mathcal{B})_i \]
where $G'_e$ is the connected graph containing $S$ deduced from $G$ without the edge set $e$.

We say $M$ is \textbf{strongly propagation incentive compatible} (SPIC) if the strict inequality holds.
\end{myDef}

Propagation incentive compatibility indicates that each agent at least does not suffer a loss from the information propagation. If a mechanism is strongly propagation incentive compatible, propagating information to all her neighbours is all agents' unique dominant strategy. 

Given all the above definitions, we first show that trivial mechanisms cannot satisfy them in our model. For example, simply give each agent a fixed reward for participation or uniformly divide the budget among all participants. The former is not budget-balanced, and the later does not satisfy PIC. This is summarized in the following propositions.

\begin{proposition} If a feasible mechanism $M$ gives each agent a fixed reward $r \in \mathbb{R}_+$ for all agents participated, then $M$ is not weakly budget balanced.
\end{proposition}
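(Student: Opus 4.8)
The plan is to exhibit, for any fixed per-agent reward $r$, a single network in $\mathcal{G}$ on which the mechanism is forced to overspend, so that the remaining budget $r_S$ must be negative. First I would observe that the genuinely interesting case is $r>0$: if $r=0$ the mechanism hands out nothing and feasibility only forces $r_S=\mathcal{B}$, which is not a meaningful reward scheme; I would dispose of this in one line (or simply read the proposition as concerning $r>0$, since $\mathbb{R}_+$ here is meant to be the positive rewards). The crucial feature of the hypothesis is that $r$ is \emph{fixed}, i.e.\ independent of the network, so the same value must be paid out no matter how large the instance is.

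Next I would invoke the fact that $\mathcal{G}$ contains networks with arbitrarily many agents all reachable from the source $S$. Concretely, fix the budget $\mathcal{B}$, choose an integer $n$ with $n > \mathcal{B}/r$, and take $G=(N,E)$ with $N=\{S,1,\dots,n\}$ and $E=\{(S,i)\mid 1\le i\le n\}$, the star rooted at $S$. This $G$ is a finite DAG with single source $S$, it is connected, and each of the agents $1,\dots,n$ receives the information directly from $S$, hence all of them participate.

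Then the computation is immediate. By hypothesis the mechanism assigns $r_i=r$ to every participating agent $i\in\{1,\dots,n\}$, so $\sum_{i=1}^{n} r_i = nr$. Since $M$ is feasible, $\sum_{i\in N} r_i = \mathcal{B}$, hence $r_S = \mathcal{B} - nr$. By the choice $n > \mathcal{B}/r$ we get $r_S < 0$, which contradicts the requirement $r_S \ge 0$ in Definition~\ref{def:bb}. Therefore $M$ is not weakly budget balanced.

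There is no real obstacle here; the only points deserving a sentence of care are (i) making explicit that ``a fixed reward $r$'' is a graph-independent constant, so the same $r$ applies to the large star we construct, and (ii) verifying that this star genuinely lies in $\mathcal{G}$ (finite, acyclic, one source $S$, every non-sponsor agent reachable, hence participating). Given these, I would keep the proof to just a few lines.
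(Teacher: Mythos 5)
Your proposal is correct and uses essentially the same argument as the paper: pick a network with $n > \mathcal{B}/r$ participating agents so that the total payout $nr$ exceeds $\mathcal{B}$, forcing $r_S<0$. Your version is merely more explicit about the witness graph (a star rooted at $S$) and the degenerate case $r=0$.
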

\begin{proof}
For a fixed budget $\mathcal{B}$, if the size of agents $n = \lceil \mathcal{B}/r \rceil + 1$, then which $ n \cdot r > \mathcal{B}$ and $r_S<0$. Hence, the property of weakly budget balance cannot be satisfied.
\end{proof}

\begin{proposition} If a feasible mechanism $M$ uniformly divides the budget among all agents participated, then $M$ is not PIC.
\end{proposition}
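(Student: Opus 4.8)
The plan is to exhibit a single small instance on which uniform division strictly rewards an agent for \emph{withholding} an edge, which directly contradicts PIC. First I would make the behaviour of the mechanism explicit: if $M$ uniformly divides $\mathcal{B}$ among all participating agents, then for any graph $G$ and any agent $i$ reachable from $S$ we have $M(G,\mathcal{B})_i = \mathcal{B}/p(G)$, where $p(G)$ denotes the number of agents that participate (i.e.\ are reachable from $S$ in $G$), and agents not reachable from $S$ receive $0$. The key structural observation is that $p(\cdot)$ is monotone under the operation in the PIC definition: cutting edges out of $i$ can only remove agents from the reachable set, never add them, so an agent can strictly increase its own share by propagating to fewer neighbours.

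Next I would fix the witness instance. Take $N=\{S,1,2\}$ and $E=\{(S,1),(1,2)\}$, which is a DAG with source $S$ and hence lies in $\mathcal{G}$. Here both $1$ and $2$ are reachable from $S$, so $p(G)=2$ and $M(G,\mathcal{B})_1=\mathcal{B}/2$. Now choose the edge set $e=\{(1,2)\}\subseteq\{(x,y)\in E : x=1\}$. In $G'_e$ the only remaining edge is $(S,1)$, so agent $2$ is no longer connected to $S$; the connected graph containing $S$ is the subgraph on $\{S,1\}$ with $p(G'_e)=1$, giving $M(G'_e,\mathcal{B})_1=\mathcal{B}$.

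Finally I would conclude: since $\mathcal{B}>0$, we have $M(G,\mathcal{B})_1=\mathcal{B}/2 < \mathcal{B}=M(G'_e,\mathcal{B})_1$, so the PIC inequality $M(G,\mathcal{B})_i\ge M(G'_e,\mathcal{B})_i$ fails for $i=1$. Hence $M$ is not PIC.

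I do not expect a genuine obstacle here; the only point needing care is pinning down the exact reading of ``uniformly divides the budget among all agents participated.'' The argument is robust to the natural variants — e.g.\ if the sponsor is also counted as a recipient one instead gets $M(G,\mathcal{B})_1=\mathcal{B}/3 < \mathcal{B}/2 = M(G'_e,\mathcal{B})_1$ — and to any tie-breaking among equal shares, since the comparison only uses the cardinality of the participant set. One should also check that $G'_e$ is well defined as the statement requires (it is: deleting $(1,2)$ leaves a connected subgraph containing $S$) and that no alternative $S$--$2$ path survives, which is immediate from the construction.
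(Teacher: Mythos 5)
Your proof is correct and rests on the same observation as the paper's: under uniform division, withholding edges shrinks the participant set and therefore enlarges each remaining agent's share. In fact your version is slightly tighter than the paper's own argument, which only derives the weak inequality $\mathcal{B}/n' \geq \mathcal{B}/n$ from $n' \leq n$ (not by itself a contradiction with PIC, since PIC permits equality), whereas your concrete witness $N=\{S,1,2\}$, $E=\{(S,1),(1,2)\}$ exhibits the strict inequality $\mathcal{B}/2 < \mathcal{B}$ that is actually needed to refute the PIC condition.
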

\begin{proof}
For a fixed budget $\mathcal{B}$ and $N\setminus\{S\}$ of size $n$, each agent gets $\frac{\mathcal{B}}{n}$ under $M$. If one of the agents propagates fewer agents, then the size of $N'\setminus\{S\}$ will be $n' \leq n$, which suggests that her new reward will be $\frac{\mathcal{B}}{n'} \geq \frac{\mathcal{B}}{n}$. Hence, $M$ is not PIC.
\end{proof}




\section{A Starter Mechanism}
\label{sec:simple}
First, we propose a simple reward mechanism that is weakly budget balanced and propagation incentive compatible. The starter mechanism
helps us understand the challenges we face and motivate our advanced mechanisms in the later sections.

Let us start with some extra notations. Denote the shortest distance of an agent $i$ to the sponsor by $d_i$, and the set of all agents with depth $l$ by $L_l = \{ i|d_i = l \}$, which is also called the layer $l$. Having defined each node's layer, we only keep those edges from upper layers to lower layers. Our starter mechanism first distributes the total reward to different layers. For all agent in layer $l$, the reward $B_l$, which is distributed among $L_l$, is defined as
\[ B_l = \beta^{l-1} (1-\beta) \cdot \mathcal{B} \]
where $\beta$ is a discount factor which satisfies $0<\beta<1$.

It is clear that the total reward to each layer is fixed so that inviting more agents does not decrease the inviter's reward since the invitees are in the next layer and won't join the current layer's reward distribution. Therefore, for all agents in $L_l$, we can simply divide the budget $B_l$ proportionally to them according to the number of agents invited by each agent. More specifically, let $n_i$ be the number of agents whom agent $i$ has informed, and $f:\mathbb{R}\rightarrow \mathbb{R}$ be a positive and monotone increasing function. Then the reward distributed to agent $i\in L_l$ is defined as
\[ r_i = \frac{f(n_i)}{\sum_{j\in L_l}f(n_j)} \cdot B_l \]

The full description of the above mechanism is illustrated below.
 
\begin{framed}
 \noindent\textbf{The Starter Mechanism}
 
 \noindent\rule{\textwidth}{0.5pt}
 
 \noindent\textsc{Input:} the graph $G$ and the budget $\mathcal{B}$.
 
 \begin{enumerate}
     \item Using breadth first search to compute the agent sets in each layer $L_1, L_2,\dots, L_{l_{\text{max}}}$.
     \item For each layer $l\in\{1,\dots,l_{\text{max}}\}$, let $B_l = \beta^{l-1} (1-\beta) \cdot \mathcal{B}$.
     \item For each agent $i\in L_l$, let $r_i = \frac{f(n_i)}{\sum_{j\in L_l}f(n_j)} \cdot B_l$.
 \end{enumerate}
 
 \noindent\textsc{Output:} the reward vector $\mathbf{r}$.
\end{framed}
 

 
 
 
    
    
    
    
        
    

Example 1 will illustrate how the starter mechanism works.

\begin{example}
Consider the following information propagation network illustrated in Figure~\ref{fig:algorithm}. The sponsor $S$ firstly propagates the information to A and B. Then A and B further propagate the information to their friends $C$, $D$ and $F$, and so on.

\begin{figure}[H]
\centering
\includegraphics[width = 8cm]{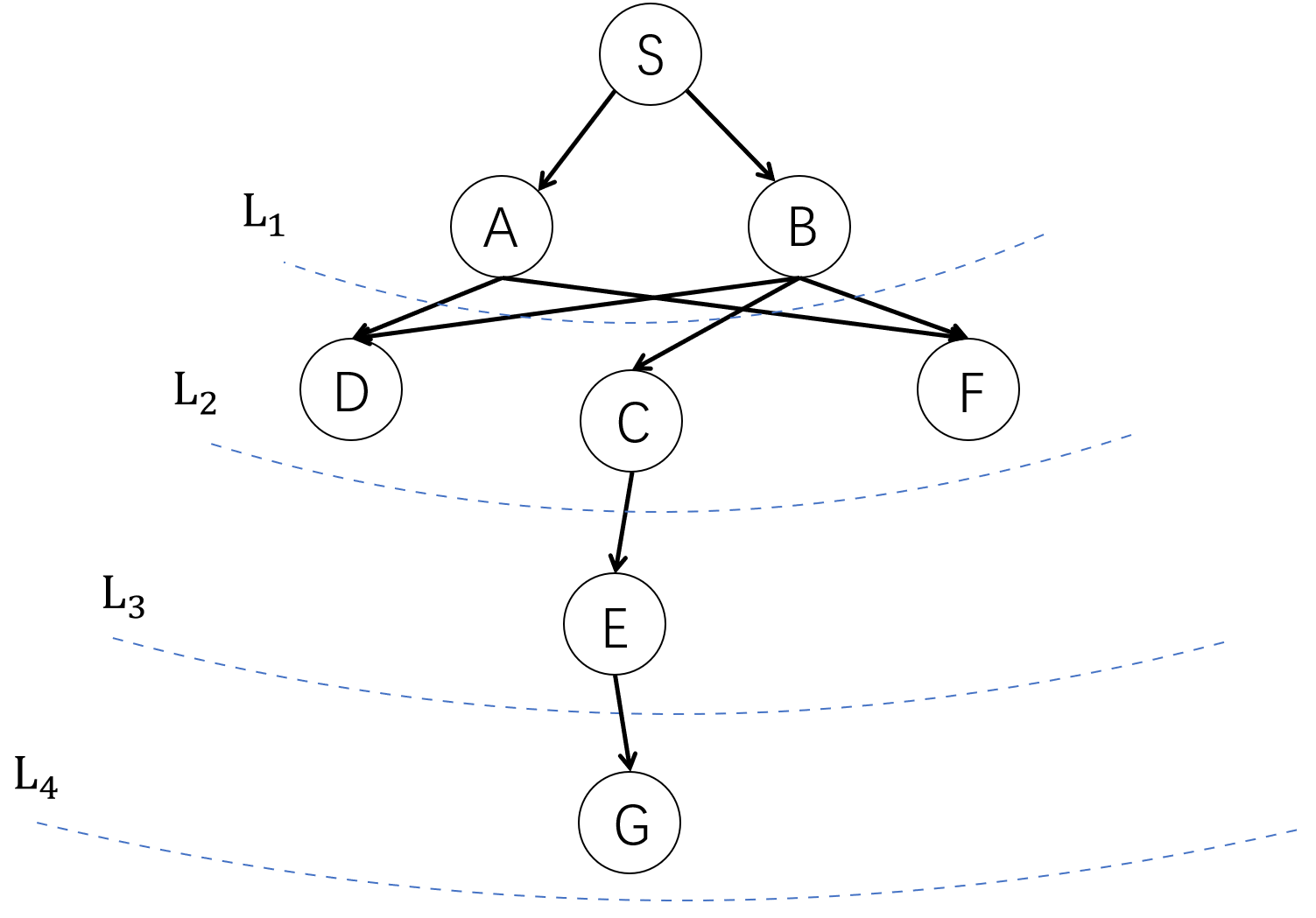}
\caption{An example of the information propagation network.}
\label{fig:algorithm}
\end{figure}

Suppose the total budget from $S$ is $\mathcal{B} = 10$, we apply the starter mechanism and let $\beta = \frac{1}{2}$, $f(n)=n$.


%

Taking the first two layers as example to compute the rewards, for $L_1$, the reward to A and B are
\begin{gather*}
    r_A = B_1 \cdot \frac{n_A}{n_A+n_B} = \frac{1}{2}\mathcal{B} \cdot \frac{n_A}{n_A+n_B} = 2 \\
    r_B = B_1 \cdot \frac{n_B}{n_A+n_B} = \frac{1}{2}\mathcal{B} \cdot \frac{n_B}{n_A+n_B} = 3
\end{gather*}

For $L_2$, only agent $C$ propagates information to others. Hence, agent $C$ will take all the reward distributed to $L_2$, i.e.,
 \begin{equation*}
     r_C = B_{l=2} = \frac{1}{4}\mathcal{B} = 2.5
 \end{equation*}
and $r_D = r_F = 0$, because $n_D = n_F = 0$.
\end{example}

Proposition~\ref{prop:1bb} shows that the starter mechanism has some nice properties.

\begin{proposition}\label{prop:1bb}
The starter mechanism is weakly budget balanced, individual rational and propagation incentive compatible.
\end{proposition}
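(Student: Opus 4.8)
The plan is to verify the three properties, with PIC being the only substantive one. Individual rationality is immediate: for $i\neq S$ the reward $r_i$ is the product of the nonnegative weight $f(n_i)/\sum_{j\in L_l}f(n_j)$ (well defined since $f>0$) and the positive amount $B_l=\beta^{l-1}(1-\beta)\mathcal{B}$, hence $r_i\ge 0$. For weak budget balance I would combine two facts: within a layer the rewards sum to exactly $B_l$, and across layers $\sum_{l=1}^{l_{\max}}B_l=(1-\beta)\mathcal{B}\sum_{l=1}^{l_{\max}}\beta^{l-1}=(1-\beta^{l_{\max}})\mathcal{B}<\mathcal{B}$. Therefore $\sum_{i\in N\setminus\{S\}}r_i=\sum_l B_l<\mathcal{B}$, so $r_S=\mathcal{B}-\sum_{i\neq S}r_i>0\ge 0$, giving WBB (and, incidentally, showing it is not BB, which motivates the later sections).

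For PIC, fix an agent $i$ with depth $d_i=l$ and let $G'_e$ be obtained from $G$ by deleting a set $e$ of outgoing edges of $i$ and keeping the component of $S$. The heart of the argument is the structural claim that the BFS layering is untouched up to layer $l$, i.e. $L_k^{G'_e}=L_k^{G}$ for every $k\le l$. Deleting edges can only increase distances from $S$, so $d_j^{G'_e}\ge d_j^{G}$ for all $j$; conversely, any $S$--$j$ path using an edge of $e$ must first reach $i$ (cost $\ge d_i=l$) and then take at least one more step, so it has length $>l$ and cannot be a shortest path to any $j$ with $d_j^{G}\le l$. Hence every agent at depth $\le l$ in $G$ retains exactly that depth in $G'_e$ and stays reachable, and no deeper agent can drop to depth $\le l$; this yields $L_k^{G'_e}=L_k^{G}$ for $k\le l$, and in particular $i\in L_l^{G'_e}=L_l^{G}$ with the same $B_l$.

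Next I would show that the only quantity in $i$'s reward formula that can move is $n_i$ itself, and it only weakly decreases. For $j\in L_l\setminus\{i\}$ the out-edges of $j$ are untouched, and an out-neighbour $k$ of $j$ lies in layer $l+1$ in $G'_e$ exactly when it did in $G$: it cannot fall to a layer $\le l$ (distances grow), cannot rise above $l+1$ (it is pointed at from layer $l$), and cannot newly enter layer $l+1$ from layer $l$ since $L_l$ is preserved; hence $n_j^{G'_e}=n_j^{G}$. For $i$ itself, a within-layer edge stays within-layer (again because $L_l$ is preserved) and is never counted, while a deleted downward edge is simply lost, so $n_i^{G'_e}\le n_i^{G}$. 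Writing $c=\sum_{j\in L_l\setminus\{i\}}f(n_j)\ge 0$, which is identical in $G$ and $G'_e$, monotonicity of $f$ and of $x\mapsto x/(x+c)$ on $x\ge 0$ gives
\[
r_i^{G'_e}=\frac{f(n_i^{G'_e})}{f(n_i^{G'_e})+c}\,B_l\ \le\ \frac{f(n_i^{G})}{f(n_i^{G})+c}\,B_l=r_i^{G},
\]
which is exactly PIC.

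I expect the main obstacle to be the structural bookkeeping of the previous two paragraphs: one must be careful that $i$'s local deletion (a) does not perturb the depth of any agent weakly above $i$, (b) neither adds to nor removes from $L_l$, and (c) does not reclassify any edge out of $i$ or out of another $j\in L_l$ between ``within-layer'' and ``downward''. Once this invariance is nailed down, the reward comparison collapses to the elementary monotonicity of $x/(x+c)$. The same bookkeeping also explains why only the non-strict inequality is claimed: if $e$ consists solely of within-layer edges of $i$, then $n_i$, hence $r_i$, is unchanged, so the starter mechanism is PIC but not SPIC.
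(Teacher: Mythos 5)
Your proof is correct and follows essentially the same route as the paper's: the same geometric-series bound for WBB (you compute the finite sum exactly where the paper bounds by the infinite series), the same positivity argument for IR, and the same monotonicity inequality $\frac{f(n_i')}{f(n_i')+c} \leq \frac{f(n_i)}{f(n_i)+c}$ with $c=\sum_{j\in L_l, j\neq i}f(n_j)$ for PIC. The only difference is that you explicitly verify the structural invariance of the BFS layering up to layer $l$ and of the other agents' counts $n_j$ under deletion of $i$'s out-edges, which the paper's proof takes for granted; this is added rigor rather than a different method.
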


\begin{proof}
For weakly budget balanced, check the total reward distributed to all agents except the sponsor:
\begin{equation*}
    \sum_{l=1}^{l_{\text{max}}} B_{l} \leq \sum_{l=1}^{\infty} B_{l} = (1-\beta)\cdot \frac{1}{1-\beta}\cdot \mathcal{B} = \mathcal{B}
\end{equation*}
then $r_S = \mathcal{B}-\sum_{l=1}^{l_{\text{max}}} B_{l} \geq 0$ is directly hold.

For individual rationality, since $0<\beta<1$, $\mathcal{B}>0$ and $f$ is a positive function, then for every agent $i\in L_l$,
\[ r_i = \frac{f(n_i)}{\sum_{j\in L_l}f(n_j)} \cdot B_l = \frac{f(n_i)}{\sum_{j\in L_l}f(n_j)} \cdot \beta^{l-1} (1-\beta) \cdot \mathcal{B} \geq 0 \]

For propagation incentive compatibility, since $f$ is monotone increasing, for every agent $i\in L_l$, if she does not propagate information to all her neighbours, which is $n_i'<n_i$, we have
\[ r_i' = \frac{f(n_i')}{f(n_i')+\sum_{j\in L_l,j\neq i}f(n_j)} \leq \frac{f(n_i)}{\sum_{j\in L_l}f(n_j)} = r_i \]
where $r_i'$ is the reward if $i$ only propagates information to $n_i'$ neighbours\footnote{Note that the inequality will be strictly hold if $f$ is strictly monotone increasing. In this case, the mechanism will be strongly PIC.}.

Therefore the mechanism is weakly budget balanced, individual rational and propagation incentive compatible.

\end{proof}



\section{A General Scheme}
Although the starter mechanism described in Section~\ref{sec:simple} has satisfied the properties of WBB, IR and PIC, there are still some concerns. The first concern is that the budget cannot be fully used because the graph is limited. In another word, $l_{max}$ is finite. This may cause a problem when using up the resource is a critical requirement. For example, in virtual cash market  such as bitcoins, the amount of bitcoins mined everyday is a fixed number, and if it is used as rewards, it has to be fully distributed. Also, when a government wants to distribute some social welfare, usually it wishes to fully distribute it.

Another concern is about ``distinguishability": once the discount factor $\beta$ is fixed, the reward allocated to each layer is fixed, no matter how many agents they have informed to and how network is structured like in the immediate layer. The starter mechanism cannot distinguish them, which means that it may not strongly incentivize agents to invite more participants.


To tackle the problems of the starter mechanism, we first further characterize the starter mechanism by considering how we share the rewards between two adjacent layers, $L_l$ and $L_{l+1}$. Let $B_l$ be the total budget passed to layer $L_l$ before they shared with $L_{l+1}$, and $b_i'$ be the current reward for agent $i$ in $L_l$. Thus, we have $B_l' = \sum_{i\in L_l} b_i'$. Then the reward are shared between $L_l$ and $L_{l+1}$ and the final reward left for $L_l$ is defined by
\begin{equation}
    B_l = \frac{\beta^{l-1}(1-\beta)}{\beta^{l-2}}B_l' = \beta(1-\beta)B_l' = (1-\beta')B_l'
    \label{eq:beta'}
\end{equation}
where $\beta'=\beta^2-\beta+1$ and $0<\beta<1$.

Therefore, the reward shared to $L_{l+1}$ is $\beta'B'_l$. Or alternatively,
\[ B_l = (1-\beta')\sum_{i\in L_l}b_i',\quad B'_{l+1} = \beta'\sum_{i\in L_l}b_i' \]

Suppose for each agent $j$ in $L_{l+1}$, we distribute $b_j'$ as the current reward from $\beta'B'_l$ and this will drive the mechanism between the layer $l+1$ and $l+2$, and so on so forth.

At last, we distribute the reward $B_l$ to agents in each layer $L_l$. This form of distribution is more precise and can be used as a basis for further generalization.

\subsection{Budget Distribution Scheme}
The problem of budget balance and ``distinguishability" in the starter mechanism occurs because the proportion of the budget distributed to the agents in $L_l$ is independent of the number of agents in each layer. To tackle this problem, we introduce a new term when we consider the reward sharing between layers $L_l$ and $L_{l+1}$ as
\[ B_{l} = (1-\beta)\sum_{i\in L_l} b_i'+\beta\sum_{i\in L_l} A_i b_i' \]
where $\beta$ is a discount factor and $A_i$ is a bonus factor such that $0<\beta <1$ and $0\leq A_i\leq1$.
Here we assume that there should be more than one node in the first layer, i.e. the sponsor should know more than one agent in the social network. Otherwise, the single agent will always have no incentive to propagate since she can take the whole budget if no other agents have been informed.
Intuitively, the  bonus factor here is to provide extra reward for the propagation effect of the agents in the layer. Now the generalized budget distribution scheme is given as below.

\begin{framed}
 \noindent\textbf{Budget Distribution Scheme}
 
 \noindent\rule{\textwidth}{0.5pt}
 
 \noindent\textsc{Input:} the graph $G$ and the budget $\mathcal{B}$.
 
 \begin{enumerate}
     \item Using breadth first search to compute the layer sets $L_1, L_2,\dots, L_{l_{\text{max}}}$ and $L_{l_{\text{max}+1}}$.
     \item For each $i\in L_1$, set $b_i' = \mathcal{B}/|L_1|$.
     \item For each $l$ in $\{1,\dots,l_{\text{max}}\}$
     \begin{enumerate}
         \item For each $i\in L_l$ compute $A_i$ according to its propagation.
         \item Let $B_{l} = (1-\beta)\sum_{i\in L_{l}} b_i'+\beta\sum_{i\in L_{l}} A_i b_i'$ and $B_{l+1}' = \sum_{i\in L_{l}} b_i' - B_l$.
         \item Distribute $B_l$ to agents in $L_l$, i.e., for agent $i$ in $L_l$, she gets $r_i$ as reward.
         \item Distribute $B_{l+1}'$ to agents in $L_{l+1}$, i.e., for agent $j$ in $L_{l+1}$, she gets $b_j'$ as current reward.
     \end{enumerate}
 \end{enumerate}
 
 \noindent\textsc{Output:} the reward vector $\mathbf{r}$.
\end{framed}


 
 
 
 
    
    
    
    
    

Notice that the starter mechanism is a special case of the budget distribution scheme when the bonus factor $A_i$ is always 0 and $\beta$ in the scheme serves as the role of $\beta'$ in Equation~\eqref{eq:beta'}. We can easily show that all mechanisms under this scheme are IR and WBB.

\begin{thm}\label{thm:bds}
The budget distribution scheme is IR and WBB.
\end{thm}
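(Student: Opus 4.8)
The plan is to follow the reward as it trickles down the layers, showing it can only be depleted, never inflated. Write $B_l' = \sum_{i\in L_l} b_i'$ for the total ``current reward'' held by layer $L_l$ before it shares with $L_{l+1}$; by construction $B_1' = |L_1|\cdot(\mathcal{B}/|L_1|) = \mathcal{B}$. Step~(b) of the scheme gives the identity $B_{l+1}' = B_l' - B_l$, and substituting $B_l = (1-\beta)\sum_{i\in L_l} b_i' + \beta\sum_{i\in L_l} A_i b_i'$ yields the cleaner form $B_{l+1}' = \beta\sum_{i\in L_l}(1-A_i)\,b_i'$.

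First I would prove by induction on $l$ that $b_i'\ge 0$ for every $i\in L_l$, and hence $B_l'\ge 0$. The base case holds since $b_i' = \mathcal{B}/|L_1|\ge 0$. For the inductive step, assume $b_i'\ge 0$ for all $i\in L_l$; since $0\le A_i\le 1$ we get $B_{l+1}' = \beta\sum_{i\in L_l}(1-A_i)b_i'\ge 0$, and because each $b_j'$ with $j\in L_{l+1}$ is a non-negative share of $B_{l+1}'$ (the within-layer split uses non-negative weights summing to the available amount, exactly as in the starter mechanism), all $b_j'\ge 0$. The same hypotheses give $B_l = (1-\beta)\sum_{i\in L_l} b_i' + \beta\sum_{i\in L_l} A_i b_i'\ge 0$, because $\beta\in(0,1)$, $A_i\ge 0$ and $b_i'\ge 0$. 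This already yields $r_i\ge 0$ for every $i\ne S$, since $r_i$ is a non-negative share of $B_l\ge 0$.

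For feasibility and WBB I would use a telescoping sum. By construction $r_S$ is precisely the part of $\mathcal{B}$ never handed out, so $\sum_{i\in N} r_i = \mathcal{B}$ and the scheme is feasible. The total given to agents other than $S$ is $\sum_{i\ne S} r_i = \sum_{l=1}^{l_{\max}} B_l$, and from $B_l = B_l' - B_{l+1}'$ this telescopes to $B_1' - B_{l_{\max}+1}' = \mathcal{B} - B_{l_{\max}+1}'$. Hence $r_S = B_{l_{\max}+1}'\ge 0$ by the induction above, so the scheme is WBB; combined with $r_i\ge 0$ for $i\ne S$, it is also IR.

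The argument is mostly bookkeeping; the one step that carries real weight is the inductive claim $B_{l+1}'\ge 0$, which relies crucially on the constraint $A_i\le 1$ — a bonus factor exceeding $1$ would let a layer pass down more than it received and break both IR and WBB. The only other mild subtlety is that the scheme does not pin down the within-layer distribution of $B_l$ (nor of $B_{l+1}'$ among $L_{l+1}$); the proof needs nothing beyond the fact that these splits use non-negative weights summing to the amount available, which every concrete instantiation considered in the paper satisfies.
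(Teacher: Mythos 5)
Your proof is correct and follows essentially the same approach as the paper: both arguments rest on the bound $0 \le B_l \le B_l'$ coming from $0 \le A_i \le 1$, and on the conservation identity $B_{l+1}' = B_l' - B_l$ between layers. Your version is in fact more complete than the paper's — you make explicit the induction establishing $b_i' \ge 0$ at every layer and the telescoping sum behind the paper's one-line claim that the total distributed is at most $B_1 + B_2' = B_1' = \mathcal{B}$, and you correctly flag that the within-layer splits must use non-negative weights, a point the paper leaves implicit.
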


\begin{proof}
For IR, notice that $0<\beta < 1$ and $0\leq A_i\leq1$. Then for any layer $L_l$, we have
\begin{align*}
    (1-\beta)\sum_{i\in L_{l}} b_i'& \leq B_l \leq (1-\beta)\sum_{i\in L_{l}} b_i'+\beta\sum_{i\in L_{l}} b_i'\\
    0& < B_l<B_l'
\end{align*}

For WBB, notice that the total reward distributed to agents is at most $B_1 + B_2' = B_1' = \mathcal{B}$.

Hence, the properties of IR and WBB holds.
\end{proof}



Intuitively, to distinguish the differences of the networks, 
we give an extra bonus for their propagation. The bonus is additional to the proportion in the starter mechanism, which is the key to maintain the property of PIC, because the fixed proportion guarantees that new comers will not decrease their inviters' rewards.

Now we consider how to choose a proper bonus factor $A_i$. To incentivize propagation, we design $A_i$ to be a function depending on the number of agents been propagated. Considering an agent $i\in L_l$, let $n_i$ be the number of agents $i$ directly propagates to and $n_{-i}$ be the number of agents who all others in $L_l$ propagates to, i.e., $n_{-i}=\sum_{j\in L_l j\neq i}n_j$.

In order to give agents in $L_l$ more rewards if they propagate the information to more neighbours, $A_i$ should increase when $n_i$ increases. On the other hand, $A_i$ should decrease when $n_{-i}$ increases on the purpose of bounding the bonus given to agents in $L_l$ (to avoid the budget left to the next layer being too less) and creating competition among agents in layer $L_l$ (where the difference between agents' propagation also matters).
Therefore, we introduce a reasonable design of $A_i$ as:

\[ A_i = (1-\alpha)^{n_{-i}} \]
where $0<\alpha<1$ is the division factor. Note that under this definition, $A_i$ is decreasing when $n_{-i}$ increases and $0<A_i\leq1$.
Then the total reward distributed to agents in layer $L_l$ in the budget distribution scheme is

\begin{equation}
\label{eq:scheme}
     B_{l} = (1-\beta)\sum_{i\in L_{l}} b_i'+\beta\sum_{i\in L_{l}} (1-\alpha)^{n_{-i}} b_i' 
\end{equation}

After deciding the discount factor $A_i$, we need to define the details of the distribution in each layer to complete the mechanism. We propose the distribution algorithm between two adjacent layers below, which performs well as we will show in the following sections. 

\begin{framed}
 \noindent\textbf{Distribution Algorithm between Two Adjacent Layers}
 
 \noindent\rule{\textwidth}{0.5pt}
 
 \noindent\textsc{Input:} the graph $G$ and $b_i'$ for each $i\in L_l$.
 
 \begin{enumerate}
     \item For each agent $i\in L_l$, set $r_i = i.V_b+i.V_h$, initialize $i.V_b = b_i'$ and $i.V_h = 0$.
     \item For each agent $j\in L_{l+1}$, initialize $b_j' = 0$.
     \item For each agent $j\in L_{l+1}$
     \begin{enumerate}
         \item Let $P$ be the set of agents in $L_{l}$ who propagate information to $j$.
         \item For each agent $i'\in P$:
         
         \begin{itemize}
             \item For each agent $i\in L_l\setminus\{i'\}$, set
             \begin{gather*}
                 i'.V_h\leftarrow i'.V_h + \alpha\beta\cdot i.V_b \\
                 b_j'\leftarrow b_j'+\alpha(1-\beta)\cdot i.V_b\\
                 i.V_b \leftarrow i.V_b-\alpha\cdot i.V_b
             \end{gather*}
         \end{itemize}
     \end{enumerate}
 \end{enumerate}
 
 \noindent\textsc{Output:} $r_i$ for each agent $i\in L_l$ and $b_j'$ for each agent $j\in L_{l+1}$.
\end{framed}


 
 
 
 
  
 
            
            
            
            
        
    

Intuitively, in the distribution algorithm above, for each agent $i\in L_l$, her reward includes two parts, the $V_h$ part and $V_b$ part where $V_b$ serves as the basic participation reward and $V_h$ serves as an extra reward for the propagation. If agent $i$ propagates information to another agent, all other agents in the same layer should provide a reward to agent $i$ and the new coming node. For example, considering an agent $j\in L_l$, it should give a total reward of $\alpha \cdot j.V_b$ to agent $i$, where $0<\alpha<1$. 
At last, we introduce $\beta$ to be the proportion of reward distribution between agent $i$ and her child. Thus, the new child in the next layer will get $\alpha \cdot (1-\beta) \cdot \sum_{j\in L_l, j\neq i}j.V_b$ and the agent $i$ will get $\alpha \cdot \beta \cdot \sum_{j\in L_l, j\neq i}j.V_b$. Notice that the final reward distribution is related to the order of agents in $L_{l+1}$ being selected in the step 3 in the distribution algorithm. Here we can just randomly choose an order. In practice, we can achieve more properties if the order is well-treated, which will be shown in Section~\ref{sec:time}. To see how a complete budget distribution with the above algorithm works, an example is illustrated below.

\begin{example}
Suppose the propagation network is shown in Figure~\ref{fig:ex2} and the budget is $\mathcal{B}=30$. In the mechanism, we set $\alpha = \beta = 0.2$. Now we show the process of the reward distribution.

\begin{figure}[H]
\centering
\includegraphics[width=4cm]{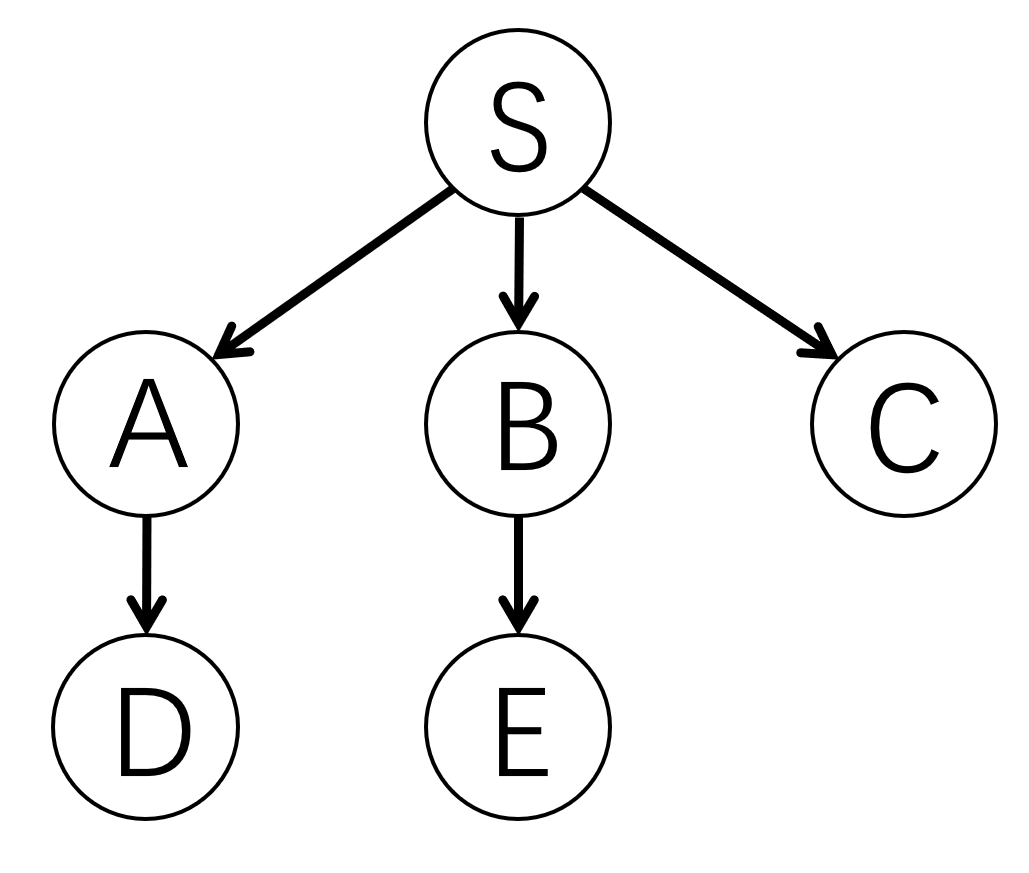}\
\caption{An example of the information propagation network}
\label{fig:ex2}
\end{figure}

\begin{itemize}
    \item Firstly, set $A.V_b=B.V_b=C.V_b=\frac{1}{3}\mathcal{B} = 10$, $D.V_b = E.V_b = 0$ and for all agents their $V_h$ are set to 0. 
    \item When A propagates to D, A and D will be rewarded, while B and C will be taken away a proportion of reward. Since $\alpha = 0.2$, Then: 
    \begin{gather*}
        {B}.V_b=B.V_b-\alpha  \cdot B.V_b = 8\\
        {C}.V_b=C.V_b-\alpha  \cdot C.V_b = 8
    \end{gather*}
    
    The reward taken from B and C will be distributed between A and D by a proportion of $\beta$, therefore:
    \begin{gather*}
        {A}.V_h=\beta  \cdot \alpha ({B}.V_b+{C}.V_b) = 0.8\\
        {D}.V_b=(1-\beta)  \cdot \alpha ({B}.V_b+{C}.V_b) = 3.2
    \end{gather*}
    \item Similarly, when B propagates to E, B and E will be rewarded. Notice that here $C.V_b$ has been updated:
    \begin{gather*}
    {B}.V_h = \beta \cdot \alpha (A.V_b + C.V_b) = 0.72\\
    {E}.V_b=E.V_b + (1-\beta) \cdot \alpha(A.V_b + C.V_b) = 2.88\\
    {A}.V_b=A.V_b-\alpha  \cdot A.V_b = 8\\
    {C}.V_b=C.V_b-\alpha \cdot C.V_b = 6.4
    \end{gather*}
\end{itemize}
\end{example}


At last, we show that the distribution algorithm satisfies the scheme described in Equation~\eqref{eq:scheme}. For each agent $i$ in layer $L_l$, the part $V_b$ of the final reward distributed to her is $(1-\alpha)^{n^{-i}} \cdot b_i'$. On the other hand, the total propagation reward to all agents in $L_l$ is $(1-\beta)\sum_{i\in L_l}[1-(1-\alpha)^{n_{-i}}]b_i'$. Therefore, the total final reward distributed to layer $L_l$ is:


\begin{displaymath}
    \begin{aligned}
    B_l &= \sum_{i\in L_l}(1-\alpha)^{n_{-i}} \cdot b_i' +  (1-\beta)\sum_{i\in L_l}[1-(1-\alpha)^{n_{-i}}]b_i' \\
    &= \sum_{i\in L_l}\{(1-\alpha)^{n_{-i}} \cdot b_i' +  (1-\beta)[1-(1-\alpha)^{n_{-i}}]\}b_i' \\
    &= \sum_{i\in L_l}[(1-\beta)+\beta(1-\alpha)^{n_{-i}}]b_i' \\
    &= (1-\beta)\sum_{i\in L_l} b_i'+\beta\sum_{i\in L_l}(1-\alpha)^{n_{-i}}b_i'
    \end{aligned}	
\end{displaymath}
which is identical to Equation~\eqref{eq:scheme}.


\subsection{Properties of the Budget Distribution Scheme}

In this section, we show that the budget distribution scheme with our distribution algorithm holds all the properties required in our model.

\begin{thm}\label{thm:main}
The budget distribution scheme that uses the distribution algorithm is individually rational, budget balanced and strongly propagation incentive compatible\footnote{Here we assume that for each layer we have more than one agent. For more general cases, we will discuss it in Section~\ref{sec:special}}.
\end{thm}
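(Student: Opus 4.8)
The plan is to treat the three properties in sequence, leaning on Theorem~\ref{thm:bds} for the easy parts and concentrating on strong propagation incentive compatibility. Individual rationality and weak budget balance are essentially inherited: the distribution algorithm is one concrete instantiation of the budget distribution scheme (with $A_i=(1-\alpha)^{n_{-i}}\in(0,1]$), and for the per-agent nonnegativity one just observes that throughout the algorithm each $i.V_b$ is only ever multiplied by $1-\alpha\in(0,1)$ while each $i.V_h$ and each $b_j'$ only ever receive nonnegative increments, so $r_i=i.V_b+i.V_h\ge 0$. To upgrade WBB to BB I would use a conservation identity: a single elementary update of the algorithm removes $\alpha\,i.V_b$ from $i.V_b$ and adds $\alpha\beta\,i.V_b$ to $i'.V_h$ and $\alpha(1-\beta)\,i.V_b$ to $b_j'$, and $\alpha\beta+\alpha(1-\beta)=\alpha$; hence $\sum_{i\in L_l}(i.V_b+i.V_h)+\sum_{j\in L_{l+1}}b_j'$ stays equal to its initial value $\sum_{i\in L_l}b_i'$, so layer $L_l$ keeps $\sum_{i\in L_l}r_i$ and forwards exactly $B'_{l+1}=\sum_{i\in L_l}b_i'-\sum_{i\in L_l}r_i$. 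Since the BFS layering retains only consecutive-layer edges, $L_{l_{\max}}$ forwards nothing and distributes its whole incoming budget; telescoping these identities from $L_1$ downward gives $\sum_{i\ne S}r_i=\mathcal B$, i.e.\ $r_S=0$.

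For SPIC it is enough to show that deleting one out-edge $(i,j^*)$ of a fixed agent $i$ and then restricting to $S$'s component strictly lowers $r_i$ — an arbitrary forbidden set is removed one edge at a time through legal intermediate graphs, and strict decreases compose. Deleting out-edges of $i\in L_l$ leaves the layers $L_1,\dots,L_l$ unchanged, leaves the out-degrees $n_j$ of the other members of $L_l$ unchanged (their shortest paths do not pass through $i$, which sits in the same layer), and leaves the inputs $b_j'$ of all members of $L_l$ unchanged (these are fixed by the processing of layers $1,\dots,l-1$). Consequently $i.V_b=(1-\alpha)^{n_{-i}}b_i'$ is \emph{unaffected}, so the deletion acts on $r_i$ only through $i.V_h$. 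Here I would use the explicit form of $i.V_h$: for each $i''\in L_l\setminus\{i\}$ the value $i''.V_b$ is subjected to exactly $n_{-i''}$ multiplicative ``taxes'', equalling $(1-\alpha)^{k-1}b_{i''}'$ at the $k$-th one, and exactly $n_i$ of these taxes (those triggered by a child of $i$) occur at positions forming a set $K_{i''}\subseteq\{1,\dots,n_{-i''}\}$, each contributing $\alpha\beta(1-\alpha)^{k-1}b_{i''}'$ to $i.V_h$; thus $i.V_h=\alpha\beta\sum_{i''\ne i}b_{i''}'\sum_{k\in K_{i''}}(1-\alpha)^{k-1}$.

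Comparing $G$ with $G'$ obtained by removing $(i,j^*)$, and insisting on the \emph{same} processing order of $L_{l+1}$ in both (which, for the smaller graph, simply drops the ``$i$ as a parent of $j^*$'' substep, or all of $j^*$ if it is pruned — the same deletion in either case), exactly one event disappears from each $i''$'s tax sequence, say the one at position $p\in K_{i''}$, and every later event shifts down by one. The resulting change in $\sum_{k\in K_{i''}}(1-\alpha)^{k-1}$ is
\[
-(1-\alpha)^{p-1}+\sum_{\substack{k\in K_{i''}\\ k>p}}\!\bigl[(1-\alpha)^{k-2}-(1-\alpha)^{k-1}\bigr]
=-(1-\alpha)^{p-1}+\alpha\!\!\sum_{\substack{k\in K_{i''}\\ k>p}}\!\!(1-\alpha)^{k-2},
\]
and since the exponents $k-2$ for $k>p$ are distinct and finitely many, $\alpha\sum_{k>p}(1-\alpha)^{k-2}<\alpha\sum_{t\ge p+1}(1-\alpha)^{t-2}=(1-\alpha)^{p-1}$, so the change is strictly negative for every $i''$. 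As the standing assumption gives $|L_l|\ge 2$ and all $b_{i''}'>0$, summing over $i''\ne i$ shows $i.V_h$ — hence $r_i$ — strictly decreases, which is SPIC. \textbf{The main obstacle} is precisely this reliance on matched processing orders: an individual $r_i$ genuinely depends on the order in which $L_{l+1}$ is processed, so the comparison must be made with orders coupled between $G$ and $G'$ (or with one canonical, e.g.\ index-sorted, order fixed once and for all) — for unrelated, adversarially chosen orders even ordinary PIC can be violated — and the second delicate point is arranging the geometric-tail estimate above to be \emph{strict} rather than merely weak, since that strictness is exactly what yields the ``strong'' in SPIC.
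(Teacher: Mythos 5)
Your proposal is correct and reaches the same conclusions, but it is substantially more careful than the paper's own proof, and the extra care is not wasted. The overall architecture matches: IR and WBB are inherited from Theorem~\ref{thm:bds}; budget balance comes from the fact that nothing leaks out of the last layer (the paper argues this at the scheme level via $A_i=1$ when $n_i=0$, you argue it at the algorithm level via the per-update conservation $\alpha\beta+\alpha(1-\beta)=\alpha$ plus telescoping --- equivalent, though yours actually verifies that the \emph{algorithm} implements the scheme's accounting rather than assuming it). The real divergence is in SPIC. The paper's argument is: each propagation by $i$ adds the positive increment $\alpha\beta\sum_j r_j$ to $i$'s reward, and descendants can only tax non-ancestors, hence SPIC. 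This ignores the interaction between $i$'s own multiple propagation events: dropping one of $i$'s edges removes one taxing event from each peer $i''$'s sequence but \emph{shifts $i$'s remaining events to earlier positions}, where they are computed on larger values of $i''.V_b$ and therefore yield strictly larger increments. Your decomposition --- $i.V_b=(1-\alpha)^{n_{-i}}b_i'$ is unaffected by $i$'s own out-edges, and $i.V_h=\alpha\beta\sum_{i''\neq i}b_{i''}'\sum_{k\in K_{i''}}(1-\alpha)^{k-1}$ changes by $-(1-\alpha)^{p-1}+\alpha\sum_{k\in K_{i''},\,k>p}(1-\alpha)^{k-2}<0$ via the geometric-tail bound --- is exactly the estimate needed to close this gap, and the strictness of that bound is what legitimately delivers the ``strong'' in SPIC. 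You also correctly flag the dependence of individual rewards on the processing order of $L_{l+1}$, which the paper dismisses with ``randomly choose an order''; your insistence on coupling the orders between $G$ and $G'_e$ (or fixing a canonical order) is necessary for the comparison in the PIC definition to be well-posed. In short: same skeleton, but your proof supplies two arguments (the shift-versus-loss estimate and the order coupling) that the paper's proof needs and does not give.
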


\begin{proof}
For IR and WBB, the proof is completed in Theorem~\ref{thm:bds}.

For budget balanced, when comes to the last layer $L_{l_{\text{max}}}$, for all agent $i\in L_{l_{\text{max}}}$, $n_i = 0$, so, $B_{l_{\text{max}}} = B_{l_{\text{max}}}'$ with $A_i=1$. Therefore, the mechanism is budget balanced.

For strongly PIC, it can be satisfied if for any agent $i$, we have
\begin{enumerate}
    \item When agent $i$ propagates information to a new agent, it will increase her reward.
    \item Whatever her descendants decide to act, it should not decrease the agent $i$'s reward.
\end{enumerate}

For the first condition, when an agent $i$ propagates information to a new agent, she gains rewards from other agents. Assume that the current reward owned by agent $i$ is $r_i$, then after the propagation, her reward will become
\begin{equation*}
    r'_i = r_i+\beta \cdot \alpha \sum_j r_{j}
\end{equation*}
where $j$ represents all the agents in the same layer of $i$'s other than $i$ herself. Since $\beta$, $\alpha$ and $r_{j}$ are always positive, then the second term should always be positive. Therefore $r'_i > r_i$, and the first condition is satisfied.

For the second condition, we need to focus on the propagation of an agent's descendant. Notice that in the algorithm, when one of the descendants propagates information to others, she can only take reward from the agents who do not belong to its ancestors.

Therefore, The budget distribution scheme that uses the distribution algorithm is individually rational, budget balanced and strongly propagation incentive compatible.
\end{proof}

Notice that Theorem~\ref{thm:main} proves not only the property of PIC, but also strongly PIC. This makes propagating information to all neighbours a unique dominant strategy for each agent. Therefore, in practice, the mechanism can achieve the maximal information propagation without the limitation of time.


\section{Practical Issues}
In this section, we will discuss some issues we may meet in practice.
\subsection{Special Cases of the Scheme}
\label{sec:special}
In the distribution algorithm between two adjacent layers discussed in previous section, if there is only one agent in the upper layer, then there will be no agents' reward to take from and the parent and the new agent will get nothing from the propagation. To solve this problem, we divide it to two different cases for an agent $i$ with only herself in the layer in our solution:
\begin{enumerate}[c{a}se 1:]

    \item There are leaf agents in previous layers that are not $i$'s ancestors.
    \item There is no leaf agents in previous layers that is not $i$'s ancestors.

\end{enumerate}

For case 1, we can simply treat those leaf agents as upper layer agents, i.e., we can take reward from those agents.
For case 2, which is much more complicated, the main idea is that tracing up to find where there are more than one agent in the layer and try to take reward from them. Notice that the reward we take away should be well-designed since it may reduce the ancestor agents' reward, which may harms the property of PIC.

More specifically, we present the solution to the special cases below\footnote{Here we assume that there are at least two agents in the first layer. In practise, the requires the sponsor has at least two neighbours, which is normal and reasonable.}.

\begin{framed}
 \noindent\textbf{Distribution Algorithm for Single Agent Layer}
 
 \noindent\rule{\textwidth}{0.5pt}
 
 \noindent\textsc{Input:} the graph $G$ and $b_i'$, where $L_l = \{i\}$.
 
 \begin{enumerate}
     \item For agent $i$, set $r_i' = i.V_b+i.V_h$ where $i.V_b = b_i'$ and $i.V_h = 0$.
     \item For each agent $j\in L_{l+1}$, initialize $b_j' = 0$.
     \item Let $F$ be the set of all leaf agents other than $i$ in $G_l$, where $G_l$ is the subgraph of $G$ with only agents in $\{S\}\cup L_1\cup\dots\cup L_l$.
     \item For each agent $j\in L_{l+1}$,
     \begin{enumerate}
         \item If $F\neq \emptyset$, then for each agent $f\in F$, set
         \begin{gather*}
             i.V_h\leftarrow i.V_h + \alpha\beta\cdot f.V_b\\
             b_j'\leftarrow b_j'+\alpha(1-\beta)\cdot f.V_b\\
             f.V_b \leftarrow f.V_b-\alpha\cdot f.V_b
         \end{gather*}
         \item Otherwise, set $x$ be the closest ancestor to $i$ which has more than one parent agents ($x$ must exists since $F=\emptyset$ and we assume $|L_1|>1$) and $q$ be the distance between $i$ and $x$. Then, for all agent $y$ in $x$'s parent agents, let $m$ be the number of $x$'s parents, set
             \begin{gather*}
                 b_j' \leftarrow b_j' + \frac{\alpha}{m\cdot 2^q} \cdot y.V_b \cdot (1 - \beta)\\
                 i.V_h \leftarrow i.V_h + \frac{\alpha}{m\cdot 2^q} \cdot y.V_b \cdot \beta\\
                 y.V_b \leftarrow y.V_b - \frac{\alpha}{m\cdot 2^q} \cdot y.V_b
             \end{gather*}
     \end{enumerate}
 \end{enumerate}
 
 \noindent\textsc{Output:} $r_i$ for agent $i$ and $b_j'$ for each agent $j\in L_{l+1}$.
\end{framed}

Here we provide an example of the special cases for better understanding. In Figure~\ref{fig:chouchou3}, assume that the sponsor $S$ owns a budget $\mathcal{B}=30$ and $\alpha = \frac{1}{5}, \beta = \frac{1}{5}$. As the initialization in the budget distribution scheme, the sponsor $S$ first gives out all the reward to agent A, B and D. So we have $A.V_b = B.V_b = D.V_b = 10$. Then all 3 agents of the first layer propagates information to C and, according to the budget distribution scheme and distribution algorithm, $C.V_b = 8.64$ and $A.V_b = B .V_b =D.V_b = 6.4$.

When agent C further propagates information to agent E, notice that there is only other agents in the same layer with C. Hence, we should apply distribution algorithm for single agent layer above. Since there is no leaf nodes in previous layers and the ancestor node C has three parent agents. Consequently, C and E will take reward from C's parents. Finally,
\begin{gather*}
    C.V_h = \frac{\alpha}{3\cdot 2}\cdot \sum_{i\in L_1} i.V_b\cdot \beta = 0.128\\
    E.V_b = \frac{\alpha}{3\cdot 2}\cdot \sum_{i\in L_1} i.V_b\cdot (1-\beta) = 0.512
\end{gather*}
and $A.V_b = B .V_b =D.V_b \approx 6.187$.


\begin{figure}[H]
\centering
\includegraphics[width=4cm]{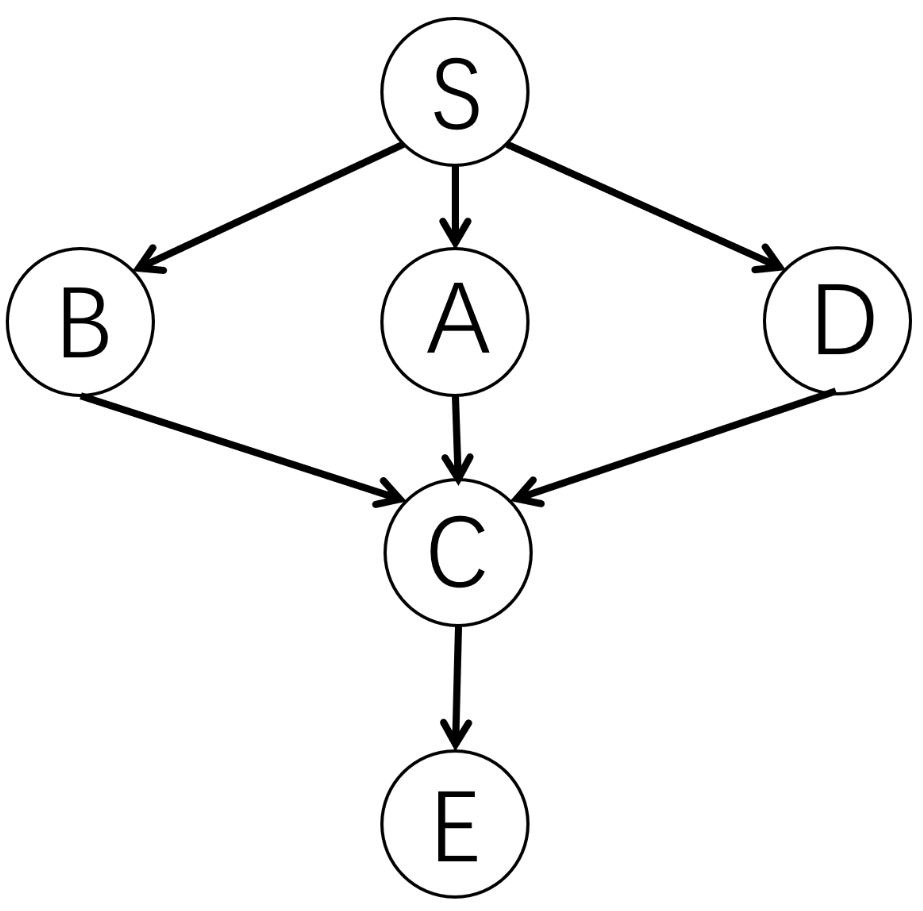}\
\caption{Example for the special case where there exists single agent layers.}
\label{fig:chouchou3}
\end{figure}

\begin{thm}
The budget distribution scheme that uses distribution algorithm for single agent layer is still strongly propagation incentive compatible, individual rationality and budget balanced.
\end{thm}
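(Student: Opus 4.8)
The plan is to build on the proofs of Theorems~\ref{thm:bds} and~\ref{thm:main}: only strong PIC needs genuinely new work, while IR and BB follow by the same arguments once we check the new transfer rule is well behaved. For IR, every line of the single-agent-layer algorithm either increments some $V_h$ or some $b_j'$ by a nonnegative quantity, or replaces a donor's $V_b$ by $(1-\alpha)\,V_b$ in case~1 and by $(1-\tfrac{\alpha}{m\cdot 2^q})\,V_b$ in case~2; since $0<\alpha<1$ and $m\ge 2$, both multipliers lie in $(0,1)$, so no $V_b$ goes negative and no $V_h$ ever shrinks, hence $r_i\ge 0$ for every $i$. For BB, each elementary transfer is mass-preserving — the amount removed from a donor's $V_b$ equals the sum of the $\beta$-share sent to $i.V_h$ and the $(1-\beta)$-share sent to $b_j'$ — so the total of $\mathbf r$ stays equal to $\mathcal B$ throughout; as in Theorem~\ref{thm:main}, at the last layer every agent has $n_i=0$, so $A_i=1$ and nothing is carried further down, and the whole budget ends up allocated.

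For strong PIC I would re-verify the two sufficient conditions used in Theorem~\ref{thm:main}. Condition~1 — that informing a fresh neighbour strictly raises the propagator's reward — is immediate for an agent $i$ alone in its layer: informing a new $j$ adds $\sum_{f\in F}\alpha\beta\,f.V_b$ to $i.V_h$ in case~1 and $\sum_{y}\tfrac{\alpha}{m\cdot 2^q}\,\beta\,y.V_b$ in case~2, and both sums are strictly positive because the relevant donor set ($F$, or the parent set of the ancestor $x$, which is nonempty since $|L_1|>1$) is nonempty and every $V_b$ is positive by IR; because the donor $V_b$'s stay positive, each successive propagation gives another positive increment, so informing all neighbours strictly beats any proper subset. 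For agents not alone in their layer, Condition~1 is exactly as in Theorem~\ref{thm:main}.

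The real obstacle is Condition~2: no action of a descendant may lower an agent's reward. In the plain distribution algorithm this was automatic because a propagation only ever drew from same-layer — hence non-ancestor — agents, but case~2 of the single-agent-layer algorithm breaks this, since a singleton-layer descendant $i'$ now draws a $\tfrac{\alpha}{m\cdot 2^q}$ fraction from the parents $y$ of its closest multi-parent ancestor $x$, and those $y$'s \emph{are} ancestors of $i'$. I would handle this by comparing $r_y$ in $G$ against $r_y$ in $G'_e$ where $e$ deletes the edge $y\to x$: because $x$ has at least two parents it stays connected in $G'_e$, so $x$'s entire subtree — and every singleton-layer descendant in it — survives, but $y$ is no longer a parent of $x$, so $y$ is never charged in case~2 on account of that subtree; against this saving, $y$ forgoes the $V_h$ it earned from the $y\to x$ transfer. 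The claim to establish is that the total an ancestor $y$ can be charged by all singleton-layer descendants hanging below $x$ — a sum over depths carrying the $\tfrac1{2^q}$ damping and a further $\tfrac1m\le\tfrac12$ in front — never exceeds that forgone $V_h$, so deleting the edge cannot raise $r_y$; the remaining possibilities ($e$ larger than a single edge, or the deletion demoting $x$ to a single-parent node, which only pushes the relevant ancestor further up and shrinks $\tfrac{\alpha}{m\cdot 2^q}$ still more) all reduce to the same inequality. Proving this weighting inequality — that the geometric $2^q$ discount together with $m\ge 2$ keeps the cumulative reach-back below the one-step gain — is the heart of the argument; once it is in place Condition~2 holds for every agent, and together with Condition~1 and the IR/BB parts above the scheme with the single-agent-layer algorithm is individually rational, budget balanced, and strongly propagation incentive compatible.
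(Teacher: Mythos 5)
Your IR and BB arguments are fine (and in fact more explicit than the paper's, which just asserts these are ``obviously maintained''), and you have correctly located the crux: Condition~2 of the PIC argument from Theorem~\ref{thm:main} breaks in case~2 of the single-agent-layer algorithm, because a singleton-layer descendant now charges the parents $y$ of its closest multi-parent ancestor $x$, and those parents are ancestors of the propagator. The problem is that you stop exactly at the decisive step: you state ``the claim to establish is that the total an ancestor $y$ can be charged \ldots never exceeds that forgone $V_h$'' and then declare this weighting inequality to be ``the heart of the argument'' without proving it. A proof proposal that defers its self-identified central inequality is not a proof; this is a genuine gap.

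Moreover, the comparison you set up is not the one the paper uses, and yours is the harder (arguably false) one. You propose to bound the cumulative reach-back charge on $y$ by the forgone gain $y.V_h = \alpha\beta\sum_{j\neq y} j.V_b$ from the edge $y\to x$. That gain scales with $\beta$, while the reach-back charge $\sum_{q\ge 1}\frac{\alpha}{m\cdot 2^{q}}\,y.V_b$ does not; for small $\beta$ (or a small layer $L_l$) your inequality fails, so this route cannot close the argument in general. The paper instead compares the cumulative charge against what $y$ would lose \emph{anyway} by not propagating to $x$: it bounds the total reach-back by the geometric series $\sum_{i=0}^{\infty}\frac{\alpha}{2^{i}m}\,b_x = \frac{2\alpha b_x}{m}$, and since the termination condition guarantees $m\ge 2$, this is at most $\alpha b_x$, which is a lower bound on the reward taken from $y$ as a non-propagating same-layer agent when another parent informs $x$. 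That is the inequality you need to prove (or at least the one the paper relies on); your $1/2^{q}$ and $1/m$ observations are the right ingredients, but they must be assembled against the correct baseline. You also omit the paper's final step, the prisoners'-dilemma table combining the two cases to conclude strong PIC.
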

\begin{proof}
Here we have to prove the completeness of the special cases where there exists single agent layers. For individual rationality and budget balanced, it is obviously maintains in special cases. The critical part is for strongly propagation incentive compatibility.

For strongly PIC, with distribution algorithm for single agent layer, we must guarantee that the rewards taken from the parent agents should be less than the rewards taken if they choose not to propagate. 

There are two cases for single agent layers as described above. First, there are nonempty leaf agents set subpgraph with previous layers. In this case, these leaf agents have the same role as normal agents in upper layer in distribution algorithms. Thus from Theorem~\ref{thm:main} we know the strongly PIC still holds.

In the second case, we have no such leaf agents and have to get the reward from ancestor agents. Taking Figure:~\ref{fig:chouchou3} as an example, the latter agent may seek back to upward layers for the reward. Thus we need to consider if the propagation of the upper layer node which offers reward is truthful or not, in this case, the propagation from node A, B or D to C. Because each agent in a segment of a single chain may take reward from the upper layer node, we can derive the following statement:
\begin{equation*}
    \text{(Reward taken with propagation)}\leq \sum_{i=0}^{\infty}\frac{\alpha}{2^{i}m} \cdot b_{x}
    \leq \frac{2\alpha \cdot b_x}{m}
\end{equation*}
where $m$ again is the number of agents who are the original ancestors of the single chain. For example, in Figure:~\ref{fig:chouchou3}, the original ancestors of the single chain C-E are A, B and D, so $m=3$ in this case. In all circumstances $m$ should always be bigger than one because the termination condition in the mechanism is to find an ancestor node with more than one parent agent, and the ancestor node's parent number is $m$. Hence, we have:
\begin{equation*}
    \text{(Reward taken without propagation)} \geq \alpha \cdot b_x \geq \frac{2\alpha \cdot b_x}{m}
\end{equation*}

This indicates that propagation will help the node to lose less reward.

Finally, we can prove that it is strongly PIC in both special cases. The key idea is to use a prisoners' dilemma as shown in Table~\ref{tbl:PD}, where $\epsilon > 1$ and $r>0$. Each cell in the table represents the reward of the agent's action, where the first entry represents the reward of A and the second represents the reward of B. Along with the property of strongly PIC in budget distribution scheme with distribution algorithm, we can prove that the version with algorithm for single agent layer is still strongly PIC.
        
\centering
\begin{table}
\begin{tabular}{|c|c|c|}
\hline
 \diagbox[width=8em]{A}{B}& Propagate  & Not Propagate \\\hline
Propagate & ($-\alpha \cdot b_A / \epsilon$ , $-\alpha \cdot b_B / \epsilon$) & (+r,$-\alpha \cdot b_B$) \\\hline
Not propagate &  ($-\alpha \cdot b_A$ , +r) & (0 , 0) \\
\hline
\end{tabular}
\caption{Reward for Prisoners' Dilemma in our mechanism}
\label{tbl:PD}
\end{table}
\centering

\end{proof}

\subsection{Time Efficiency}
\label{sec:time}
At last, we consider the case what will happen in practice where we introduce a new concept of the time vector $\Vec{t}$ here. We discuss a new property called time efficiency. Participants should be willing to propagate as soon as possible in the mechanism; otherwise there may exist a deadlock in propagation. For example, if the mechanism is designed to make every node willing to propagate after a first propagation led by another node, none of the nodes will be willing to propagate first. This may be disastrous to the real-world applications of our mechanisms. Thus we define the property of time efficiency.

\begin{myDef}
A mechanism M is time efficient if and only if for any $G= (N,E)\in \mathcal{G}$, the budget $B\in \mathbb{R}_+$ and $\forall i\in N\setminus \{S\}$, for all $j \in \{j|(i,j)\in E\}$, changing $t'_j > t_j$ will result in 
\begin{equation*}
    M(G,B,t)_i \geq M(G,B,t')_i
\end{equation*}
\end{myDef}

\begin{proposition}
The budget distribution scheme that uses distribution algorithm and its variant for single agent layer is time efficient if we choose the agent with the order of their arriving time in the outer loop.
\end{proposition}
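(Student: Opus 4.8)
The plan is to pin down exactly which quantities in a run of the mechanism can change when the arrival time $t_j$ of a child $j$ of $i$ is increased, and to argue that $r_i$ can only move the ``right'' way. First I would observe that, for an agent $i\in L_l$, the reward $r_i$ is produced entirely inside the single invocation of the distribution algorithm between $L_l$ and $L_{l+1}$: the inputs of that invocation are the values $b_k'$ for $k\in L_l$ (computed from $L_1,\dots,L_{l-1}$ and hence independent of $t_j$ for any $j$ in $L_{l+1}$ or deeper), the edges between the two layers, and the order in which the agents of $L_{l+1}$ are visited in the outer loop. Since that order is the arrival-time order, increasing $t_j$ can only push $j$ to a later position in this loop (and push some descendants of $j$ later in deeper invocations, which does not touch $r_i$). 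Hence it suffices to prove that moving a single child $j$ of $i$ one step later in the outer loop never increases $r_i$, and then to compose such adjacent moves.

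For one adjacent transposition I would split $r_i = i.V_b^{\mathrm{final}} + i.V_h^{\mathrm{final}}$. By the identity already checked in the paper, $i.V_b^{\mathrm{final}} = (1-\alpha)^{n_{-i}}b_i'$, which is independent of the visiting order, so only $i.V_h^{\mathrm{final}}$ matters. Let $j^{+}$ be the agent that currently follows $j$ and is swapped with it. If $j^{+}$ is not a child of $i$, then $i$ receives no bonus from $j^{+}$, and processing $j^{+}$ only multiplies the $V_b$ of the agents in $L_l\setminus\{i\}$ that are not parents of $j^{+}$ by $1-\alpha$; so after the swap the bonus $i$ collects while $j$ is processed --- which is $\alpha\beta$ times the current total $V_b$ of $L_l\setminus\{i\}$ --- is weakly smaller, while the bonuses $i$ gets from its remaining children are untouched (those processed before both $j$ and $j^{+}$ see an identical prefix, and those processed afterwards see the same aggregate $(1-\alpha)$-factors from the pair $\{j,j^{+}\}$ regardless of the pair's internal order). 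Thus $r_i$ weakly decreases. If $j^{+}$ is also a child of $i$, I would show the total bonus $i$ receives from processing $\{j,j^{+}\}$ consecutively is unchanged by the swap: the aggregate decrement the pair inflicts on $L_l\setminus\{i\}$ is order-independent, and the two bonus contributions telescope symmetrically.

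Composing the transpositions then yields $M(G,\mathcal{B},t)_i\ge M(G,\mathcal{B},t')_i$, i.e.\ time efficiency; and the single-agent-layer variant is handled identically, because the leaf agents used in case~1 and the traced-back ancestor agents used in case~2 play exactly the role of the ``upper-layer'' agents above, while delaying a deeper agent again only moves it later in its own outer loop. The step I expect to be the real obstacle is the case where $j^{+}$ is also a child of $i$, and more generally any situation in which nodes of $L_{l+1}$ have several parents: there the inner loop over the parent set $P$ interleaves with the outer-loop order, so the ``symmetric telescoping'' above must be made precise relative to a fixed inner-loop rule (e.g.\ parents visited in their own arrival-time order), and one has to verify carefully that no recombination of overlapping parent sets lets a later visit of $j$ actually raise $i.V_h^{\mathrm{final}}$. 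Getting this bookkeeping exactly right --- not the easy case $j^{+}\notin\mathrm{children}(i)$ --- is where the work is.
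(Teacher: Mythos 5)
Your overall strategy---isolate $r_i$ inside the single invocation of the distribution algorithm between $L_l$ and $L_{l+1}$, observe that $i.V_b^{\mathrm{final}}=(1-\alpha)^{n_{-i}}b_i'$ is independent of the visiting order, and reduce time efficiency to adjacent transpositions in the outer loop---is a sharper version of what the paper actually does. The paper's proof tracks only the harvest $\alpha\beta\sum_j j.V_b^{(t)}$ from the \emph{delayed} propagation and notes that the $V_b$'s it draws on can only have shrunk by factors $(1-\alpha)^{g_j}$ in the meantime; that is precisely your easy case ($j^{+}$ not a child of $i$), and your treatment of it is correct and, unlike the paper's, explicitly verifies that $i$'s own $V_b$ and the bonuses from $i$'s other children are unaffected by the swap.

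The case you flag as the real obstacle is, however, not merely unfinished: the claim you propose to prove there---that the total bonus $i$ collects from a consecutive pair of its children is invariant under swapping them---is false once parent sets overlap nontrivially, and this breaks the proposition itself, not only your proof. Take $L_l=\{i,k,m\}$ with current values $b_i'=a$, $b_k'=c$, $b_m'=e$, and $L_{l+1}=\{u,v\}$ where $u$ is a child of both $i$ and $m$ while $v$ is a child of $i$ alone. Processing $u$ before $v$ (inner loop $m$ then $i$) gives $i$ a total bonus of $\alpha\beta\bigl[((1-\alpha)+(1-\alpha)^2)c+(2-\alpha)e\bigr]$, whereas delaying $u$ past $v$ gives $\alpha\beta\bigl[(1+(1-\alpha)^2)c+(2-\alpha)e\bigr]$, strictly larger by $\alpha^2\beta c$ (the other inner-loop order yields the same sign). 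The mechanism is that $m$'s turn for $u$ depletes $k.V_b$ before $i$ can harvest it through $v$; postponing $u$ lets $i$ harvest $k$ at full value first. Since $i.V_b^{\mathrm{final}}$ is order-independent, delaying $u$ strictly increases $r_i$. So your decomposition is the right tool, and it exposes a gap that the paper's one-line proof silently steps over: the statement holds only under an additional hypothesis (e.g., each agent of $L_{l+1}$ has a unique parent, or the bookkeeping is changed so that a parent's turn is tied to its own propagation time rather than to the child's arrival), and neither your sketch nor the paper supplies one.
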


\begin{proof}
Consider the reward of node $i$, which has two parts $V_b$ and $V_h$, thus $r_i = i.V_{b}+i.V_{h}$. Let the reward of propagation obtained by node $i$ to be $r_i$. If $i$ propagates at time t, $ i.V_h = \alpha \cdot \beta \sum_{j\in J} j.V_{b}^{(t)}$, where J represents all the nodes that will be taken reward from because of $i$'s propagation. Now assume that $i$ choose to propagate later at $t'$, if there are another $g_j$ propagations during the time interval $[t,t']$ that will take reward from node $j$ where $g_j \geq 0$, $j.V_b$ will be $j.V_b^{(t')} = (1-\alpha)^{g_j}j.V_{b}^{(t)}$. So for all $j$, we have $j.V_b^{(t')} \leq j.V_{b}^{(t)}$, and we have:
\begin{equation*}
    \sum_j j.V_{b}^{(t)} \geq \sum_j j.V_{b}^{(t')} \quad  \textbf{for } t<t'
\end{equation*}
where $\sum V_{jb}^{(t)}$ is decreasing in terms of t. Therefore spreading immediately is the dominant strategy.
\end{proof}


\section{Conclusions}
In this paper, we formalized an information propagation game on a network where a sponsor is willing to pay a fixed reward to incentivize agents to propagate information for her. The model has many promising applications where social networks are well engaged, such as viral marketing and questionnaire survey. We proposed a novel scheme of the reward sharing mechanism, which is propagation incentive compatible and budget balanced. We also offered an instance of our scheme where strongly PIC, BB and time efficiency are achieved.

There are also several aspects that deserve further investigation. For example, the Sybil attack is a common issue in real-world applications based on social networks. It is quite challenging to achieve both PIC and Sybil proof in our settings. One other possible improvement is to extend our solutions to the settings of resource or task allocation, where we have the dimension of competition for the resource or task. 


\bibliography{ecai}

\end{document}